\newcommand {\mm}[1]   {\ifmmode{#1}\else{\mbox{\(#1\)}}\fi}
\newcommand{\Rspace}        {\mm{{\mathbb R}}}
\newcommand{\Zspace}        {\mm{{\mathbb Z}}}
\newcommand{\MSAxis}[1]     {\mm{{\mathcal Mid}}{({#1})}}
\newcommand{\Birth}[1]      {\mm{{\rm bth}{({#1})}}}
\newcommand{\Death}[1]      {\mm{{\rm dth}{({#1})}}}
\newcommand{\Premed}[3]     {\mm{{\rm Pre}_{#1}}{({#2},{#3})}}
\newcommand{\Postmed}[3]    {\mm{{\rm Post}_{#1}}{({#2},{#3})}}
\newcommand{\Betti}[1]      {\mm{{\beta}_{#1}}}
\newcommand{\Edist}[2]      {\mm{\|{#1}-{#2}\|}}
\newcommand{\Skip}[1]       {}
\newcommand{\HE}[1]         {{\textcolor{blue}{{{\rm {#1}}}}}}
\newcommand{\ES}[1]         {{\textcolor{cyan}{{{{\rm #1}}}}}}
\newcommand{\bigo}[1]     {{\mathcal O}{\left({#1}\right)}}
\begin{document}

\title{The Mid-sphere Cousin of the Medial Axis Transform}

\author{
Herbert Edelsbrunner\
\inst{1}\
\orcidID{0000-0002-9823-6833} \and \\
Elizabeth Stephenson\
\inst{1,2}\
\orcidID{0000-0002-6862-208X} \and  \\
Martin Hafskjold Thoresen\
\inst{3}\
\orcidID{0009-0007-8322-6492}
}
\authorrunning{H.\ Edelsbrunner, E.\ Stephenson and M.H.\ Thoresen}

\institute{ISTA, 3400 Klosterneuburg, Austria, 
\email{edels@ist.ac.at}
\and Orteliu, Oslo, Norway
\email{elizasteprene@gmail.com} 
\and Independent, Drammen, Norway
\email{m@mht.wtf}
}


\maketitle

\begin{abstract}
  The \emph{medial axis} of a smoothly embedded surface in $\Rspace^3$ consists of all points for which the Euclidean distance function on the surface has at least two global minima.
  We generalize this notion to the \emph{mid-sphere axis}, which consists of all points for which the Euclidean distance function has two interchanging saddles that swap their partners in the pairing by persistent homology.
  It offers a discrete-algebraic multi-scale approach to computing ridge-like structures on the surface.
  As a proof of concept, an algorithm that computes stair-case approximations of the mid-sphere axis is provided.

  \keywords{Medial axes, Morse functions, persistent homology, vineyards, computation.}

\end{abstract}

\section{Introduction}
\label{sec:1}

In the late 1960s, Harry Blum~\cite{Blu67} revolutionized the automated classification of shapes that typically arise in biology and related fields by suggesting the medial axis as a skeleton fit to describe amorphous blobs.
The concept was introduced in the smooth setting by Federer~\cite[Definition~4.1 on page~432]{Fed59} almost a decade earlier, but his contribution was largely overlooked.
Since then, the medial axis has been used as a tool in shape classification, image analysis, animation, computer graphics, and other fields that benefit from the feature-preserving simplification of unwieldy objects; see e.g.\ \cite{BTG95,TaHe03,YaYa18}. 

\smallskip
Given a closed surface, $M \subseteq \Rspace^3$, the \emph{medial axis} is the set of centers of spheres that touch $M$ in at least two points and do otherwise not intersect $M$.
Slightly more elaborately, the \emph{medial axis transform} is the same set of points paired with the radii at which the spheres touch $M$.
Requiring instead that $M$ lies inside the spheres, we get an apparently less useful complementary concept, which for the purposes of this paper we refer to as the \emph{circum-sphere axis transform} of $M$.
We observe that there is an appealing third option: the \emph{mid-sphere axis transform}, which consists of centers and radii of spheres that touch $M$ in at least two points, and for any two there is a closed curve on $M$ inside the sphere, and a closed curve on the sphere together with a filling surface both inside $M$; see Figure~\ref{fig:EllipsoidA} for an illustration.
\begin{figure}[hbt]
  \vspace{-0.1in}
    \includegraphics[width=.60\textwidth]{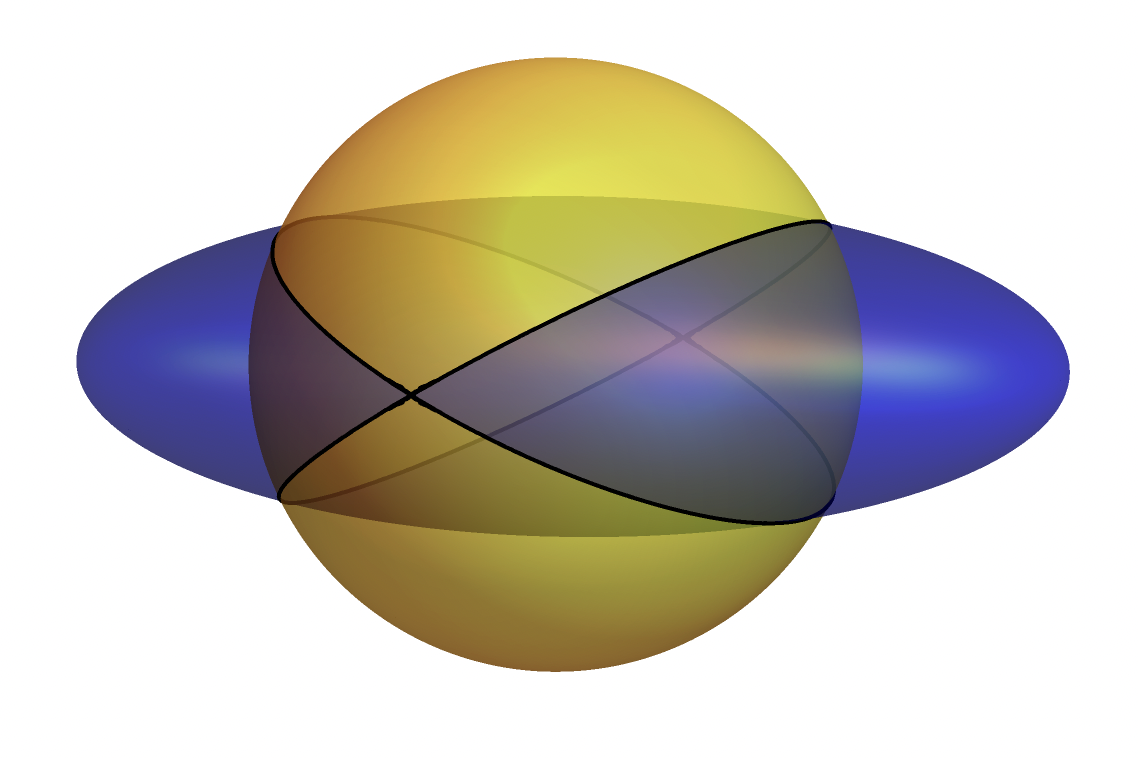}
    \vspace{-0.3in}
    \centering
    \caption{\footnotesize A mid-sphere touching an ellipsoid in two antipodal points.
    The vertical plane that passes through these two points intersects the ellipsoid in a curve that lies inside the sphere,
    and the horizontal plane that passes through the same two points intersects the sphere in a great-circle filled by a disk that both lie inside the ellipsoid.}
    \label{fig:EllipsoidA}
\end{figure}
\footnote{We borrow the terminology from $3$-dimensional geometry, where the \emph{mid-sphere} of a convex polytope is a sphere that touches every one of its edges.
The sphere thus intersects every facet in an inscribed circle.
By the Circle Packing Theorem of Koebe~\cite{Koe36}, Andreev~\cite{And70}, and Thurston~\cite[Chapter 13.6]{Thu02}, every $2$-connected planar graph has a geometric realization as the edge skeleton of a convex polytope that has a mid-sphere.
Curiously, this is not true if we require an in-sphere or a circum-sphere.}

\smallskip
All three axes are subsets of the \emph{symmetry set} of $M$ defined as the set of centers of spheres that touch $M$ in at least two points.
Compared to this ``hopelessly complicated object'' \cite{BGG85}, the three axes are sparse skeleta that specialize in bringing out particular types of features.
For example, the mid-sphere axis is sensitive to ridge-like features, in which a point $x \in M$ belongs to a curve called a \emph{ridge} if it locally minimizes or maximizes the curvature along a principle curvature direction; see e.g.\ Chapter 11 on ``Ridges and Ribs'' in \cite{Por01}.
This definition is in terms of derivatives and thus hyper-sensitive to local noise.
We propose that complementing it with more global topological requirements, we get more reliably computable access to ridge-like structures.

\medskip \noindent \textbf{Outline.}
Section~\ref{sec:2} formally defines the mid-sphere axis transform.
Section~\ref{sec:3} reviews the Morse theory of the squared distance function from a point.
Section~\ref{sec:4} introduces the notion of Faustian interchanges, which characterize the mid-sphere axis transform.
Section~\ref{sec:5} explains the pairing mechanism of persistent homology and its connection to Faustian interchanges.
Section~\ref{sec:6} exploits the vineyard algorithm to compute an approximation of the mid-sphere axis transform.
Section~\ref{sec:7} concludes the paper.

\section{The Mid-sphere Axis Transform}
\label{sec:2}

As illustrated in Figure~\ref{fig:section}, a point may belong to the mid-sphere axis for more than one radius.
In other words, the projection of the mid-sphere axis transform in $\Rspace^4$ to the mid-sphere axis in $\Rspace^3$ is not necessarily injective, which is in contrast to the medial and the circum-sphere axes.

\smallskip
Let $M \subseteq \Rspace^3$ be a smoothly embedded connected closed surface, and write $S = S(x,r)$ for the sphere with center $x \in \Rspace^3$ and radius $r > 0$.
The set $\Rspace^3 \setminus M$ consists of two open components, a bounded one called the \emph{inside} of $M$, and an unbounded one called the \emph{outside} of $M$.
Similarly, we distinguish between the \emph{inside} of $S$, which is an open ball, and the \emph{outside} of $S$, which is the complement of a closed ball.
The sphere \emph{touches} the surface in a point $b \in M \cap S$ if the tangent planes of $M$ and $S$ at $b$ agree, and it touches $M$ \emph{generically} if $\sfrac{1}{r}$ is different from the two principal curvatures of $M$ at $b$.

\begin{figure}[hbt]
    \centering
    \vspace{0.0in}
    \resizebox{!}{2.2in}{\input{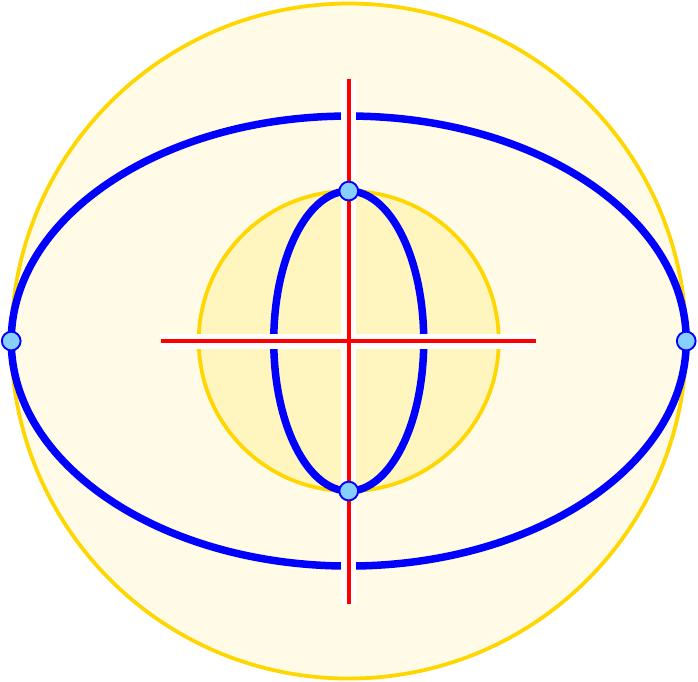_t}}
    \vspace{-0.0in}
    \caption{\footnotesize Two locally cylindrical pieces of a \emph{blue} surface with elliptic cross-sections as shown.
    The smaller cylinder runs inside the wider cylinder, and both share an axis normal to the drawing plane.
    The two \emph{yellow} spheres touch the two cylinders in two points each, and because these spheres are concentric but have different radii, their common center belongs to two different \emph{red} sheets of the mid-sphere axis, which intersect in a line that runs along the common central axis of the two cylinders.}
    \label{fig:section}
\end{figure}

\begin{definition}
  \label{dfn:mid-sphere_axis_transform}
  Let $M \subseteq \Rspace^3$ be a smoothly embedded connected closed surface.
  A point-radius pair $(x,r)  \!\in \! \Rspace^3 \! \times \! \Rspace$ belongs to the \emph{mid-sphere axis transform} of $M$ if
  \begin{enumerate}[(i)]
    \item $S = S(x,r)$ touches $M$ generically in at least two points, $b \neq c$;
    \item there is a smooth closed curve, $\mu \subseteq M$, that passes through $b$ and $c$ and lies otherwise inside $S$;
    \item there is a smooth closed curve, $\sigma \subseteq S$, that passes through $b$ and $c$, and a filling surface, $\tau$, with boundary $\sigma$, that both lie otherwise inside $M$.
  \end{enumerate}
\end{definition}

\begin{figure}
    \centering
    \includegraphics[width=0.75\linewidth]{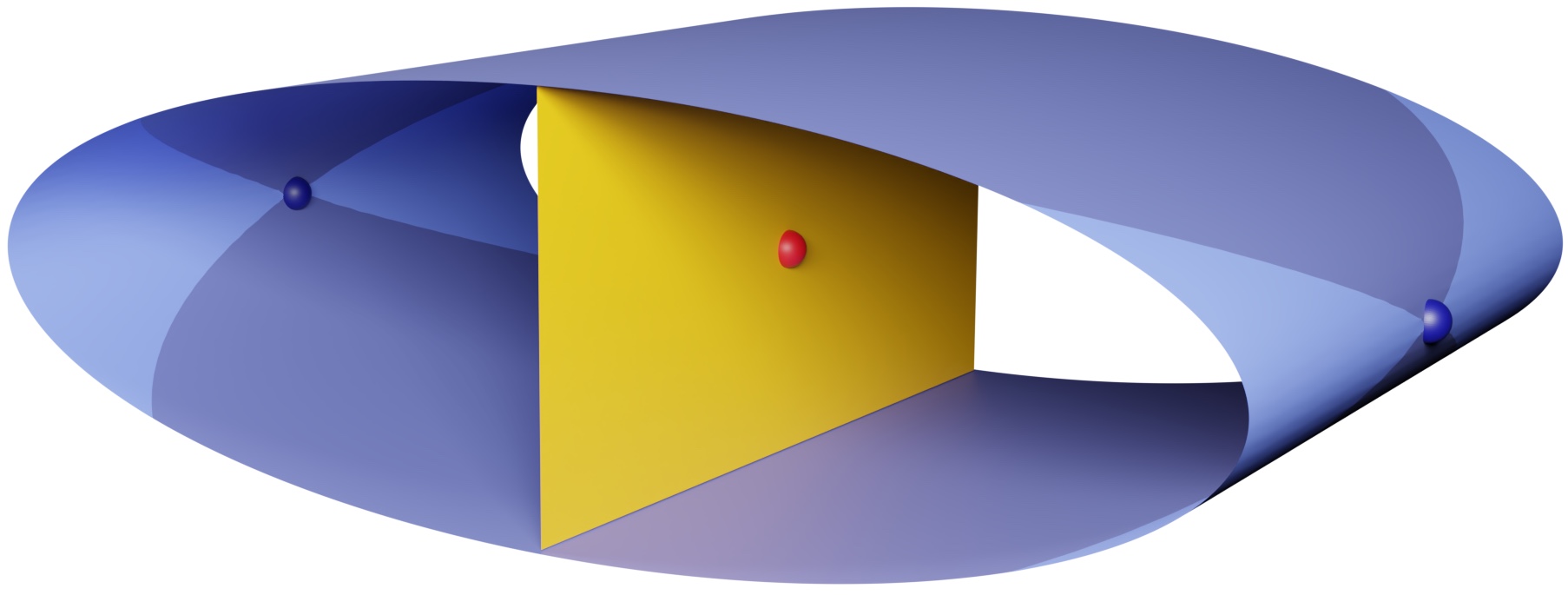}
    \caption{Another example of $S\cap M$, in which $M$ is a flattened cylinder and $S$ has its center marked in red on the yellow mid-sphere axis. 
    The points $b, c$ at which the sphere touches the surface are seen in dark blue.}
    \label{fig:def1}
\end{figure}
See Figure \ref{fig:EllipsoidA} for an intuitive example of the closed curves, $\mu$ and $\sigma$ and the filling surface, $\tau$.
In this example, $\tau$ is a disk, but more generally it may be an orientable $2$-manifold with boundary $\sigma$.
Its purpose is to separate the two open curves obtained by removing points $b$ and $c$ from $\mu$.
It is not necessarily simply-connected and thus may have handles, like the torus.
If $M$ is homeomorphic to a sphere, then Condition {\sf (iii)} can be weakened because the existence of $\tau$ follows from the existence of $\sigma$.
We write $\MSAxis{M}$ for the mid-sphere axis transform of $M$.
Generically, it is a $2$-dimensional surface in $\Rspace^4$.
The \emph{mid-sphere axis} is the projection to $\Rspace^3$ defined by forgetting the radii.
This projection is not necessarily an embedding, so the mid-sphere axis can have self-intersections.

\Skip{ 
  \smallskip \noindent \emph{Remark.}
  The topological requirements on the point-radius pairs of the mid-sphere axis transform can be strengthened by requiring that $b$ and $c$ be antipodal on $S$, or that $\sigma$ be a circle on $S$ and $\tau$ the disk whose boundary is this circle.
  We could also require that $\mu$ be part of the intersection of $M$ with a plane that passes through $b$ and $c$, and even that this plane be orthogonal to the disk $\tau$.
  We will not pursue any of these modifications to the definition in this paper, primarily because the geometric conditions are more difficult to recognize algorithmically.
} 

\section{Morse Theory of Squared Distance Function}
\label{sec:3}

The interaction of a surface, $M$, with spheres centered at $x \in \Rspace^3$ is conveniently described using the Euclidean distance of the points on $M$ from $x$.
This is a smooth function on $M$, unless $x \in M$, so it is preferable to use the square of this distance, which is always smooth.
We thus introduce $f_x \colon M \to \Rspace$ defined by $f_x (y) = \Edist{y}{x}^2$.
Writing $\nabla f_x (y)$ and $H_x (y)$ for the gradient and the Hessian of $f_x$ at $y \in M$, a point $y \in M$ is a \emph{critical point} of $f_x$ if $\nabla f_x (y) = 0$, and it is \emph{non-degenerate} if $H_x (y)$ is invertible.
The values of critical points are called \emph{critical values} of $f_x$, and all other values are \emph{non-critical values}.
Non-degenerate critical points are necessarily isolated, so if all critical points are non-degenerate, then there are only finitely many of them.
Indeed, $f_x$ is generically a \emph{Morse function} \cite{Mil63}, which is slightly stronger and requires that $f_x$ satisfies two conditions:
\begin{itemize}
  \item all critical points are non-degenerate;
  \item the values of the critical points are distinct.
\end{itemize}
Assuming $f_x \colon M \to \Rspace$ is Morse, there are only three types of critical points: \emph{minima}, \emph{saddles}, \emph{maxima}, which can be distinguished by the number of negative eigenvalues of the Hessian, which is $0$, $1$, $2$, in this sequence.
This number is traditionally referred to as the \emph{index} of the critical point; see Figure~\ref{fig:twan} where we see $2$ minima, $3$ saddles, and $1$ maximum of the displayed height function.

\begin{figure}[htb]
  \centering \vspace{0.1in}
  \resizebox{!}{1.45in}{\input{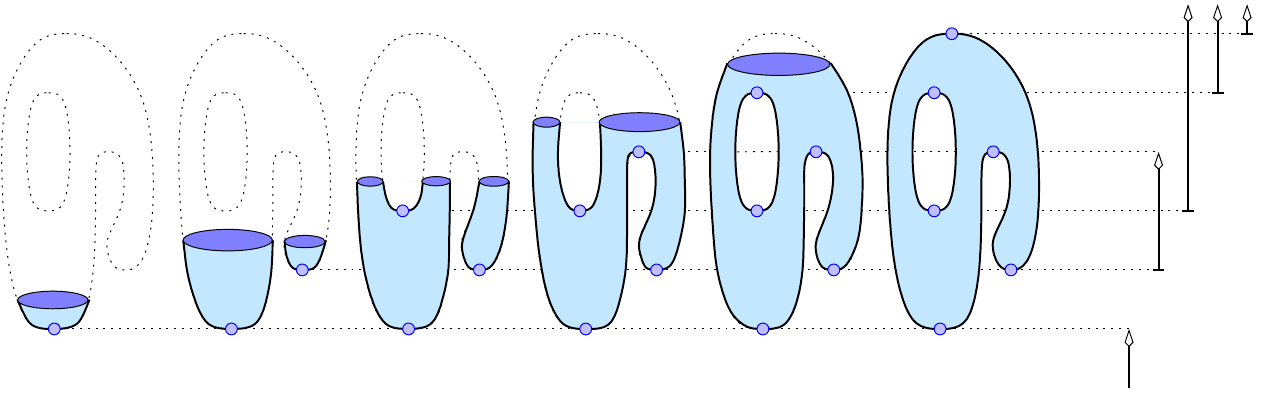_t}}
  \caption{\footnotesize The six non-empty sublevel sets at non-critical values interleaved between the critical values of the height function on the torus-with-a-nose.
  The lowest and highest of the three saddles both give birth, while the middle saddle gives death.
  On the \emph{right}, we see the barcode that shows for which value a gap, loop, or closed surface exists in the sublevel set (see Section~\ref{sec:5} for definitions).}
  \label{fig:twan}
\end{figure}

\smallskip
Besides the three \emph{local} types, we also distinguish between two \emph{global} types of critical points.
For this, we write $M_r = f_x^{-1} [0,r^2]$ for the \emph{sublevel set} of $f_x$ at $r^2$, which contains all points whose Euclidean distance from $x$ is at most $r$.
A fundamental result in Morse theory \cite{Mil63} is that $M_r$ and $M_s$ have the same homotopy type if there is no critical values in the open interval $(r,s)$.
Furthermore, whenever $(r,s)$ contains a single critical value, then $M_r$ and $M_s$ have different homotopy types.
To quantify this difference, we use the ranks of the $p$-th reduced homology groups, which we denote $\Betti{p} (M_r)$.
Whenever we pass the value of a minimum, $\Betti{0}$ increases by $1$, unless this is the first minimum, in which case $\Betti{-1}$ decreases from $1$ to $0$.
Whenever we pass the value of a saddle, either two components merge, in which case $\Betti{0}$ decreases by $1$, or a component gains a loop, in which case $\Betti{1}$ increases by $1$.
Finally, whenever we pass the value of a maximum, $\Betti{1}$ decreases by $1$, unless this is the last maximum, in which case $\Betti{2}$ increases from $0$ to $1$.
This exhausts all cases and we note that each non-degenerate critical point changes only one rank by $1$.
We say the critical point gives \emph{birth} if the rank increases by $1$, and it gives \emph{death} if the rank decreases by $1$.

\smallskip
Both the local and the global types of the critical points of the squared distance function will be important to determine whether a point, $x$, with associated radius, $r$ belongs to the mid-sphere axis transform.

\section{Faustian Interchanges}
\label{sec:4}

Consider now a continuous $1$-parameter family (a \emph{curve}) of smooth functions that starts with a Morse function on $M$ and ends with another Morse function on $M$.
Unless these two Morse functions are very similar, the family necessarily passes through functions that are not Morse.
By the \emph{essential property} of Cerf theory \cite{Cer68}, it is possible to choose the family such that it passes through only a finite number of functions that are not Morse, and each such function has only one violation of the two conditions that characterize a smooth function as Morse:
\begin{enumerate}[I.]
  \item two critical points that share the same value;
  \item one degenerate critical point with simple local neighborhood.
\end{enumerate}
A violation II is the transient configuration that appears when two critical points collide and in the process annihilate each other.
We refer to it as a \emph{cancellation} of the two critical points.
Its inverse is an \emph{anti-cancellation}, which creates two critical points where there were none.
We are however more interested in a violation I, which is the transient configuration that appears when two critical points swap the order of their values.
We refer to it as an \emph{interchange}.

\smallskip
An interchange of two minima does not affect their global type, unless they are the first two minima (the ones with smallest value).
Then the first minimum becomes second and changes from giving death to giving birth, and the other way around for the second minimum.
This happens where the curve of squared distance functions crosses the medial axis, because the point at the crossing has two closest points on $M$ or, equivalently, the squared distance function from the crossing has two minima with smallest value.
\begin{definition}
  \label{dfn:Faustian_interchange}
  A \emph{Faustian interchange} is an interchange of two critical points during which they swap their global types, one from giving death to giving birth and other from giving birth to giving death.
\end{definition}
Critical points that swap their global types when they interchange necessarily have the same index; see Section~\ref{sec:5}.
As described above, the Faustian interchanges of two minima correspond to points of the medial axis, and by a similar argument, the Faustian interchanges of two maxima correspond to points of the circum-sphere axis.
Most important for this paper are the Faustian interchanges of two saddles, which correspond to points of the mid-sphere axis.
\begin{theorem}
  \label{thm:equivalence}
  A point $x \in \Rspace^3$ with radius $r > 0$ belongs to the mid-sphere axis transform of $F$ iff $f_x$ has a Faustian interchange of two saddles at $r^2$.
\end{theorem}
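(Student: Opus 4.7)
The plan is to prove the equivalence by reducing conditions (i)--(iii) of Definition~\ref{dfn:mid-sphere_axis_transform} to the three ingredients of a Faustian saddle interchange at $r^2$: two critical points of $f_x$ at common value $r^2$, both of index $1$, together with the descending/ascending linking that swaps the birth/death roles. I would open with the standard local computation at a tangent touching point $b$: placing $x$ along the inward normal to $M$ at $b$ and using principal-curvature coordinates $(u,v)$, the Taylor expansion reads $f_x(u,v) = r^2 + (1+r\kappa_1)u^2 + (1+r\kappa_2)v^2 + \bigo{|u|^3 + |v|^3}$, so generic touching (condition (i)) is exactly non-degeneracy of $b$ as a critical point of $f_x$ at value $r^2$, and the signs of $1+r\kappa_i$ determine the index.

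I would next read off from the same quadratic form the dual local picture: the form $(1+r\kappa_1)u^2+(1+r\kappa_2)v^2$ simultaneously controls whether $M$ locally escapes the open ball $B(x,r)$ in direction $(u,v)$ on $M$ and whether $S(x,r)$ locally escapes $M^{\mathrm{in}}$ in direction $(u,v)$ on $S$. Consequently, a minimum forces $f_x \geq r^2$ on $M$ near $b$ and blocks (ii); a maximum forces all of $S$ near $b$ into the outside of $M$ and blocks (iii); only saddles admit the alternating sectors required by both, so (ii) and (iii) jointly force $b$ and $c$ to be saddles.

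Given that $b,c$ are saddles at value $r^2$, I would finish by matching the curve data to the combinatorial characterization of a Faustian interchange, namely that the four descending half-arcs of $b,c$ end at a common pair of minima $p,q$ lying in different components of $M_{<r^2}$ (and dually, the ascending half-arcs end at a common pair of maxima $P,Q$ in different components of $M \setminus M_r$); I would verify this characterization by a short case analysis on how the two descending pairs of $b$ and $c$ can overlap. For (ii), the curve $\mu$ must approach each saddle tangent to the negative cone of the Hessian, so the downward gradient flow of $f_x$ retracts the two arcs of $\mu$ inside $M_{<r^2} \cup \{b,c\}$ to the four-arc descending cycle through $p,q$, which gives precisely the shared-minima-separated-components configuration; conversely, smoothing that cycle produces $\mu$. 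For (iii), the inside-$M^{\mathrm{in}}$ sectors of $S$ at each saddle align with the ascending direction of $f_x$ on $M$, so the outward radial $f_x$-flow pushes $\sigma$ onto a cycle on $M \setminus M_r$ through the shared maxima $P,Q$, while the filling disk $\tau$ in the handlebody $M^{\mathrm{in}}$ witnesses null-homotopy there, which is equivalent to the separation of $P,Q$ in $M \setminus M_r$; the converse reconstructs $(\sigma,\tau)$ from this ascending configuration. I expect the main obstacle to be this last dualization: $\sigma$ and $\tau$ are extrinsic objects in $\Rspace^3$ whereas the ascending-manifold data lives intrinsically on $M$, so making the radial retraction precise and correctly handling the handlebody topology of $M^{\mathrm{in}}$ while pushing cycles requires care.
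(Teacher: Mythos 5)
Your approach diverges substantially from the paper's. The paper argues entirely in terms of the connectivity of the sublevel-set component $X$ of $M_r$ containing $b$ and $c$: the curve $\mu$ witnesses that $X\setminus\{b\}$ and $X\setminus\{c\}$ are connected, while the existence of $\sigma$ and $\tau$ forces $X\setminus\{b,c\}$ to be disconnected (otherwise a path between the two branches of $\mu$ inside $S$ would contradict the filling $\tau$). Those two connectivity facts immediately give the swap of birth/death roles regardless of which of $b,c$ is added first, and the converse direction just reads the same statements backwards. You instead go through a local Taylor expansion plus a descending/ascending-manifold picture. The local computation at the tangency points is correct and supplies a detail the paper only asserts: generic touching is exactly non-degeneracy, and the sign pattern of $1+r\kappa_i$ shows a minimum at $b$ blocks (ii) while a maximum blocks (iii). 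That refinement is worth keeping.

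The global part, however, has a real gap. The claim that "the four descending half-arcs of $b,c$ end at a common pair of minima $p,q$" is too strong. A Faustian interchange requires that the descending pair of $b$ and the descending pair of $c$ each bridge the \emph{same two connected components} of $M_{<r^2}$; the arcs need not terminate at the same minima, since each component can contain many minima and by the elder rule the pairing tracks the oldest one, not the endpoint of a flow line. The intrinsic formulation is exactly the one the paper uses: $X\setminus\{b,c\}$ disconnected while $X\setminus\{b\}$ and $X\setminus\{c\}$ are connected. Similarly, your dualization for (iii) via ascending arcs to shared maxima $P,Q$ and a radial retraction of $\sigma$ onto $M\setminus M_r$ is not only unworked-out (as you note) but also unnecessary: the paper derives both (ii) and (iii) from the single sublevel-set connectivity statement and never appeals to the superlevel filtration, ascending manifolds, or the handlebody structure of $M^{\mathrm{in}}$. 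Replacing the descending/ascending-manifold bookkeeping with the direct argument about $X\setminus\{b,c\}$ would both fix the error and considerably shorten the proof.
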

\begin{proof}
  ``$\implies$'':
  Assuming $(x,r) \in \MSAxis{M}$, let $b, c \in M$ be the points at which $S = S(x,r)$ touches $M$, and let $\mu, \sigma$ be the two closed curves and $\tau$ the surface that fills $\sigma$ whose existence is guaranteed by Conditions~(ii) and (iii) of Definition~\ref{dfn:mid-sphere_axis_transform}.
  Let $X$ be the connected component of $M_r$ that contains $b$ and $c$.
  Since $\mu$ exists, $X \setminus \{b\}$ is connected, and so is $X \setminus \{c\}$.
  It is also clear that $b$ and $c$ cannot be minima or maxima, so they are saddles.
  If $X \setminus \{b,c\}$ is connected, then there is a curve that starts at one branch of $\mu \setminus \{b,c\}$ and ends at the other whose points lie on or inside $S$.
  But this curve contradicts the existence of the curve $\sigma$ and the surface $\tau$ that fills $\sigma$.
  Hence, if we add $b$ to $X \setminus \{b,c\}$ before $c$, this first gives death to a component and then birth to a loop, and if we add $c$ to $X \setminus \{b,c\}$ before $b$, we observe the same pattern.
  Hence, $f_x$ has a Faustian interchange of $b$ and $c$ at $r^2$.

  \smallskip
  ``$\impliedby$'':
  Let $b$ and $c$ be the two saddles of the Faustian interchange of $f_x$ at $r^2$.
  Since $b$ and $c$ swap their global types, they must belong to the same component of the corresponding sublevel set, $X \subseteq M_r$, this component contains a loop that passes through $b$ and $c$, and $X \setminus \{b,c\}$ is not connected.
  This loop is the closed curve $\mu$ required by Condition~(ii), and that $X \setminus \{b,c\}$ is not connected implies the existence of the curve $\sigma \subseteq S$ and the surface $\tau$ that fills $\sigma$ inside $M$, as required by Condition~(iii).
\end{proof}

\section{Pairing of Critical Points}
\label{sec:5}

The idea of persistent homology is that the critical points of a function can be paired to delimit the ranges in which the corresponding features appear in the sublevel sets of the function; see e.g.\ \cite{EdHa10}.
This additional structure is useful in two ways.
First, it provides a means to assess how robust a Faustian interchange is, and we will introduce several ways to quantify this notion.
Second, it leads to an algorithm to construct the mid-sphere axis transform.

\smallskip
To explain the pairing, we recall that each critical point gives either birth or death, namely to a homology class of the sublevel set at the corresponding critical value.
Since we use reduced homology, the empty sublevel set has a $(-1)$-dimensional class, which dies when the first minimum is encountered.
Later minima give birth to $0$-dimensional classes, which die at the hand of the saddles that merge them with other $0$-dimensional classes born before them.
This is the \emph{elder rule}, which prescribes that the older class survives by absorbing the younger class.
Alternatively, a saddle may give birth to a $1$-dimensional class, which later dies at the hand of a maximum.
Indeed, every maximum gives death to such a class, except the last maximum, which gives birth to a $2$-dimensional class, namely that of the entire surface.
Besides the last maximum, there are two saddles per genus of the closed surface that remain unpaired in this process, and they represent the homology of $M$.

\begin{figure}[hbt]
  \centering \vspace{0.1in}
  \resizebox{!}{1.2in}{\input{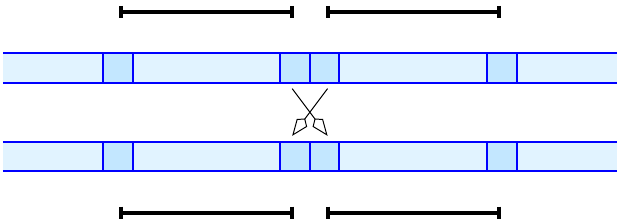_t}}
  \caption{\footnotesize The ordered sequences of critical points before and after a Faustian interchange of two saddles.
  Note that the two saddles swap their partners under the pairing.}
  \label{fig:Faustian}
\end{figure}

\smallskip
To visualize this structure, we list the critical points from left to right in the order of their values and mark a pair $(a,b)$ by the interval that begins at $a$ and ends at $b$, writing $a = \Birth{b}$ and $b = \Death{a}$.
The index of $b$ is necessarily one higher than that of $a$, so there are only three kinds of (finite) intervals for a function on a $2$-dimensional surface: those with endpoints of index $-1,0$, of index $0,1$, and of index $1,2$.
Letting $b, c$ be two saddles of a Faustian interchange, we therefore have an interval from $a = \Birth{b}$ to $b = \Death{a}$ and another interval from $c = \Birth{d}$ to $d = \Death{c}$, as illustrated in Figure~\ref{fig:Faustian}.
Since $b$ and $c$ swap their global types, they also swap their partners in the pairing; that is: $a = \Birth{c}, c = \Death{a}$ and $b = \Birth{d}, d = \Death{b}$ after the interchange.
Note however that interchanging two saddles does not necessarily make a Faustian interchange.
Assuming a Faustian interchange as illustrated in Figure~\ref{fig:Faustian}, we consider the two intervals and call
\begin{align}
  \Premed{x}{b}{c}  &= f_x (b) - f_x (a) = r - \Edist{x}{a} , \\
  \Postmed{x}{b}{c} &= f_x (d) - f_x (c) = \Edist{x}{d} - r 
\end{align}
the \emph{pre-meditation} and \emph{post-meditation} of the Faustian interchange, in which $r = \Edist{x}{b} = \Edist{x}{c}$, $a$ is the first point of the second component of $X \setminus \{b,c\}$ encountered by the growing sphere centered at $x$, and $d$ is the last point of the first component of $M \setminus \mu$ encountered by that sphere.
We will return to these notions when we discuss the pruning on the mid-sphere axis in the next section.

\section{Computation and Examples}
\label{sec:6}

This section uses the characterization of the mid-sphere axis in terms of Faustian interchanges (Theorem~\ref{thm:equivalence}) to describe an algorithm that constructs a staircase approximation of this axis.
To simplify the discussion, we ignore the radii, which may be computed without much extra effort.
The algorithm is discrete and extends to the construction of the medial and the circum-sphere axes.

\medskip \noindent \textbf{The algorithm.}
We assume the input surface is given as a triangulation; that is: a $2$-dimensional simplicial complex whose vertices are locations in $\Rspace^3$ such that each edge belongs to exactly two triangles and each vertex belongs to an alternating cyclic sequence of edges and triangles.
Given a point $x \in \Rspace^3$, we assign $f_x (u) = \Edist{u}{x}^2$ to each vertex, $u$, and the maximum value of the two or three incident vertices to each edge and triangle.
Sorting the vertices, edges, and triangles by value and breaking ties by dimension, we get the filter for which we compute the pairing as defined by persistent homology; see e.g.\ \cite[Chapter VII]{EdHa10}.
In this setting, the vertices, edges, and triangles play the roles of the minima, saddles, and maxima of the squared distance, respectively, and the transposition of two edges---one paired with a vertex and the other with a triangle---is a Faustian interchange if they replace each other in their respective pairs and thus alter their global types.

\smallskip
However, only a measure-zero set of points $x$ in $\Rspace^3$ induce squared distance functions that catch a Faustian interchange in the act.
We therefore use the vineyard algorithm introduced in \cite{CEM06} to probe space with short segments of filters to find points of the mid-sphere axis.
Given the endpoints $x', x'' \in \Rspace^3$ of such a segment, this algorithm maintains the filter and the pairing of simplices while continuously moving $x$ from $x'$ to $x''$.
The elementary operations in this sweep are transpositions of contiguous simplices in the filter, which are scheduled according to the linear interpolation between their initial and final values.
This amounts to running bubble sort on the filter at $x'$ using the filter at $x''$ as the target ordering.
Whenever a transposition swaps two edges, we check whether one gives birth, the other gives death, and the swap changes their global types.
If yes, then we have found a point whose squared distance function witnesses a Faustian interchange.

\smallskip
We use segments connecting neighboring points of the integer lattice, $\Zspace^3 \subseteq \Rspace^3$.
For example, we run the vineyard algorithm for the segment from $x' = (i,j,k)$ to $x'' = (i+1,j,k)$, and add the square $(i+\frac{1}{2}) \times [j \pm \frac{1}{2}] \times [k \pm \frac{1}{2}]$ to the staircase approximation of the mid-sphere axis iff there is at least one Faustian interchange along this segment.
We begin the hunt for Faustian interchanges at an arbitrary point of $\Zspace^3$ and reduce the boundary matrix of the corresponding filter from scratch.
Thereafter, we run the vineyard algorithm along the segments to the six neighboring integer points.
When we arrive at the end of one of these segments, we repeat while reusing the final reduced matrix instead of computing it from scratch and making sure that no segment is swept twice.
Sample results of this algorithm are shown in Figure~\ref{fig:axes}. 
\begin{figure}[h!]
  \vspace{0.0in}
  \centering
  \includegraphics[width = 0.37\textwidth]{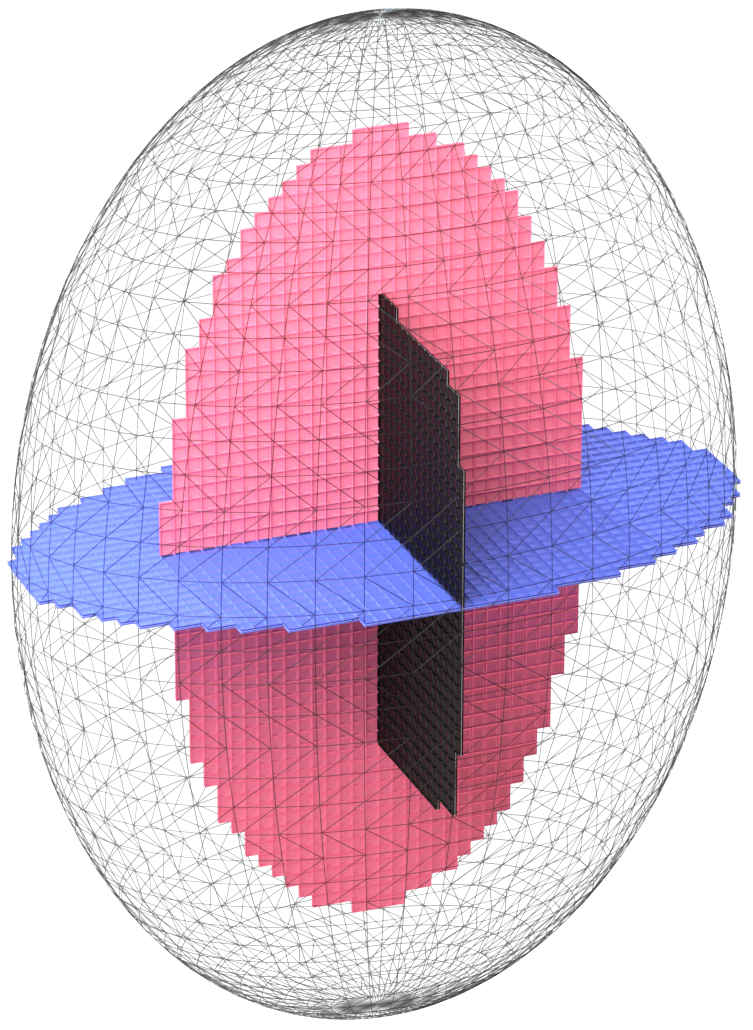}
  \includegraphics[width = 0.6\textwidth]{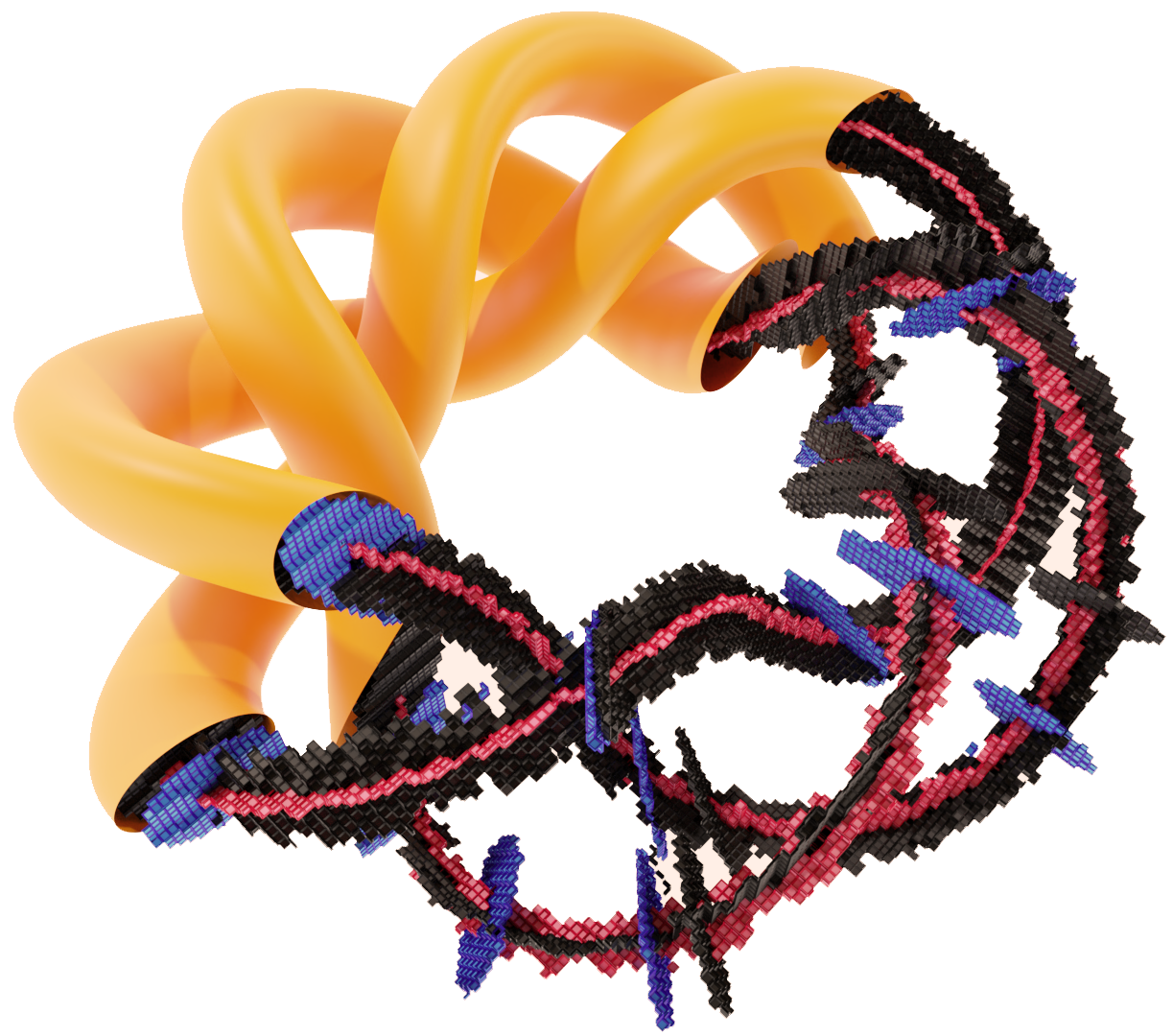}
    \vspace{-0.0in}
    \caption{\footnotesize The \emph{pink} medial, \emph{black} mid-sphere, and \emph{blue} circum-sphere axes of an ellipsoid on the \emph{left} and an elliptically thickened $(3,7)$-torus knot on the \emph{right}.
    To improve the visualization, we restrict ourselves to points inside the surfaces, which explains why the mid- and circum-sphere axes do not extend outside the two surfaces.
    }
    \label{fig:axes}
\end{figure}

\smallskip
Two or more segments can be swept in parallel, so we get some mileage out of partitioning the segments connecting neighboring integer points into a few sets, which are then processed in parallel.
\Skip{
  However, no matter how we schedule the segments, the number of swaps remains large.
  To see this note that two vertices swap along a segment iff it intersects the perpendicular bisector of the pair.
  Assuming we explore all segments connecting neighboring integer points in $[0,g]^3$, the number of such segments is $O(g^2)$.
  Letting $n$ be the number of vertices in the triangulation of the surface, this amounts to a total of $O (g^2 n^2)$ swaps.}
However, no matter how we schedule the segments, the number of interchanges is large.
Focusing on the vertices, and letting $n$ be their number and $[0,g]^3$ the domain,
we can na\"ively bound the number of interchanges by $O(g^3n^2)$, 
where every pair of vertices is interchanged along every segment.
To improve this bound, we count the number of segments along which a fixed pair is interchanged. 
This happens when their positions trade place in the ordering,
so any segment that crosses the perpendicular bisector of the pair
contributes to exactly one such interchange.
There are $O(g^2)$ such segments, so summing over all pairs of vertices gives at most $O(g^2n^2)$ interchanges.

\medskip \noindent \textbf{The matrices.}
The efficient implementation of the algorithm is challenging, in particular the computation of the reduced boundary matrices.
Assuming familiarity with the vineyard algorithm described in \cite{CEM06}, we sketch some of the more important design decisions that help in this respect.
Following \cite{CEM06}, we use a sparse matrix representation tailored to efficiently support the maintenance operations in the vineyard algorithm.
For convenience, we use $\Zspace / 2 \Zspace$ coefficients so homology can be computed using boolean matrices.
To reduce the boundary matrix, the algorithm swaps rows and columns, and it adds columns to each other.
We thus store the matrix as a sequence of sparse columns, and we maintain permutations $\pi_R$ and $\pi_C$ for the rows and columns, respectively.
With these permutations, it is easy to swap two rows or two columns, but they add an indirection when we access entries: the entry in row $i$ and column $j$ is physically stored in row $\pi_R (i)$ and column $\pi_C (j)$.

\smallskip
Each column is an ordered list of non-zero entries.
Adding two columns means computing the \texttt{xor} function, and since the columns are sorted, this can be done by walking along the two lists in parallel and ignoring rows that appear in both.
To decide which columns to add, the algorithm needs access to the maximum row index of the non-zero entries in a column.
Because of the permutation $\pi_R$, this row is not necessarily the last in the list.
Profiling an earlier version of our software revealed that finding this row index in a linear scan was the main bottleneck, so we decided to explicitly record this row index for each column and to update the record whenever it changes.
We mention that the vineyard algorithm also adds rows to each other in a second matrix, which we support by storing the transpose of that matrix in the aforementioned format.

\smallskip
In spite of using sparse versions, the amount of memory needed to remember reduced matrices is significant, so we dispose of them as soon as we can while still avoiding the reduction of the boundary matrix from scratch.
This is where breadth-first search in scheduling the segments shines, as its front of unfinished integer points is much smaller than for example for depth-first search.

\medskip \noindent \textbf{Pruning strategies.}
Similar to the medial axis, the mid-sphere axis is sensitive to small fluctuations in the data, and since our algorithm is approximate, this tends to give rise to artifacts at small scales.
We therefore measure scale and prune artifacts if their scale is found to be below a fixed threshold.
We distinguish between two strategies: the first uses a notion of \emph{distance} between the two interchanging simplices, and the second assesses their \emph{persistence}.

\smallskip
In \emph{Euclidean pruning}, we check the Euclidean distance between the two interchanging simplices.
For edges and triangles, we use the Euclidean distance between their midpoints and barycenters, respectively.
See Figure~\ref{fig:hexagons} for an illustration of the effect for different thresholds.
A likely improvement would be using the geodesic distance along the surface, but we get satisfactory results with the Euclidean distance, which we use for all three types of axes.
\begin{figure}[h!]
  \centering
  \vspace{-0.1in}
  \includegraphics[width = \textwidth]{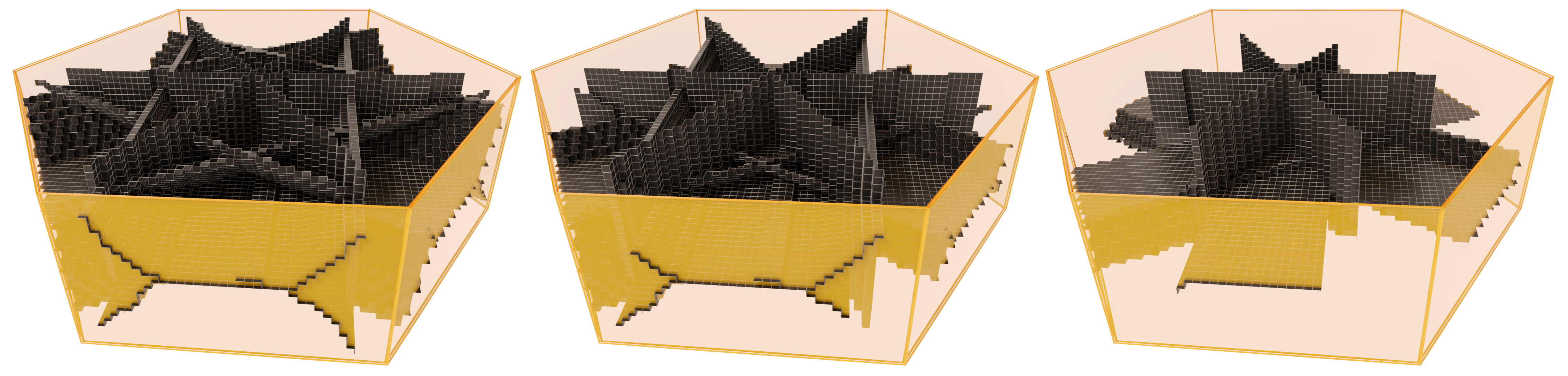}
    \vspace{-0.2in}
    \caption{\footnotesize A hexagonal solid with its mid-sphere axis progressively modified by Euclidean pruning.
    The top and bottom faces of the solid are removed for improved visibility.
    }
    \label{fig:hexagons}
\end{figure}
Alternatively, we may measure the distance combinatorially, by counting the simplices needed to connect the interchanging simplices.
We use only the simplest incarnation of this idea.
In \emph{face pruning}, we make sure that the interchanging simplices do not have a common face, and we apply this constraint to the construction of the mid-sphere and the circum-sphere axes.
Likewise, in \emph{coface pruning}, we make sure that they do not have a common coface, and we apply this to the construction of the medial and the mid-sphere axes.

\smallskip
In \emph{persistence pruning}, we consider the persistence of the two interchanging simplices; that is: their pre- and post-meditations, as defined in Section~\ref{sec:5}.
We note that they have different geometric meaning: for the mid-sphere axis, the pre-meditation measures the two components inside the mid-sphere before they merge due to adding the first of the two edges, while the post-meditation measures the loop formed by adding both edges simultaneously.
In our current implementation, we require that both meditations exceed a fixed threshold.
We find persistence pruning indispensable for computing the mid-sphere axis, as it detects many spurious loops that form the boundaries of triangles and are filled right after being formed.
More generally, we use it for all three types of axes.

\begin{figure}[h!]
  \centering
  \vspace{-0.1in}
  \includegraphics[width = .95\textwidth]{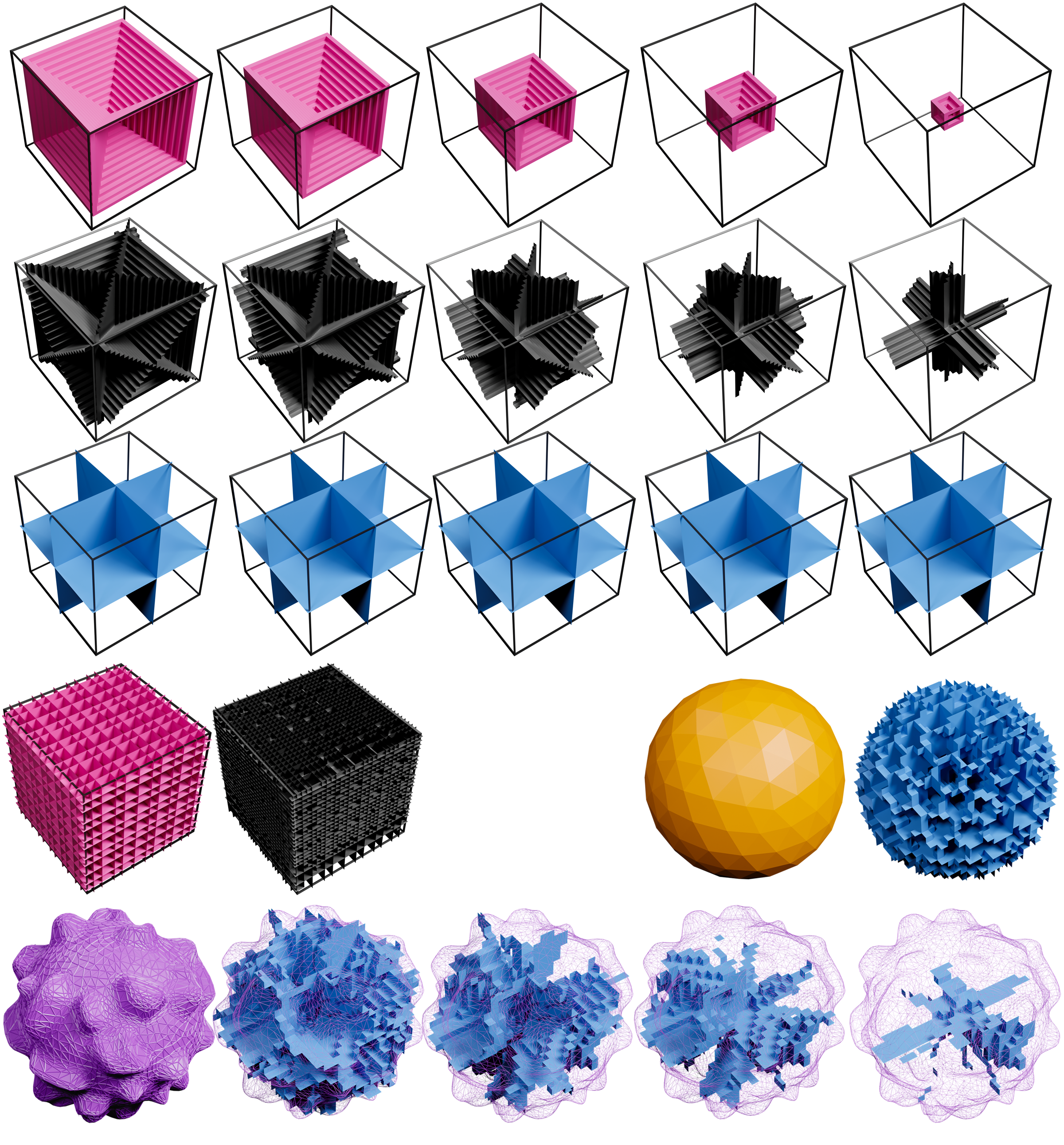}
    \vspace{-0.2in}
    \caption{\footnotesize Here we demonstrate some effects of pruning on the medial (pink), midsphere (black), and circumsphere (blue) axes. The top three rows demonstrate the effect of Euclidean pruning on the axes resulting from a cube. We also see that pruning is sometimes not needed for the circumsphere axis.
    In the fourth row, we demonstrate the necessity of the other types of pruning. Without coface pruning, the medial axis of the cube blows up, as does the midsphere axis without persistence pruning. To show the circumsphere axis behaving similarly, we needed a different input than the cube, and so used the sphere (in yellow). Without face pruning, it too becomes unwieldy. 
    In the bottom row, we demonstrate an input (the purple virus-like shape) for which Euclidean pruning has an impact on the circumsphere axis. 
    }
    \label{fig:minisquares}
\end{figure}

\medskip \noindent \textbf{Code availability.}
We have published the code used to compute these examples on GitHub \footnote{\underline{\href{https://github.com/medial-ax/medial-ax}{https://github.com/medial-ax/medial-ax}}}. It can be compiled and run locally, or sampled via a live demo we have running via GitHub Pages \footnote{\underline{\href{https://medial-ax.github.io/medial-ax/}{https://medial-ax.github.io/medial-ax/}}}.

\Skip{\bigskip
This section uses the characterization of the mid-sphere axis in terms of Faustian interchanges (Theorem~\ref{thm:equivalence}) to give an algorithm that constructs a stair-case approximation of the axis.
To simplify, we ignore the radii associated to the points of the axis, which would be easy to compute as well to give the mid-sphere axis transform.
The algorithm is essentially discrete or algebraic and easy to modify so it constructs the medial axis and the circum-sphere axis as well.

\smallskip
We assume the surface is given as a triangulation, which is a $2$-dimensional simplicial complex whose vertices are locations in $\Rspace^3$ such that each edge belongs to two triangles, and each vertex belongs to a ring of triangles.
In this discrete data structure, minima, saddles, and maxima of a squared distance function on the smooth surface are represented by vertices, edges, and triangles of the simplicial complex, respectively.
Given a point $x \in \Rspace^3$, the squared distance function assigns $f_x (u) = \Edist{x}{u}^2$ to each vertex $u$,
and for each edge and triangle, it assigns the maximum value of the two or three incident vertices.
Sorting the vertices, edges, and triangles by value and breaking ties by dimension, we get the \emph{filter} for which we compute the pairing as defined by the persistent homology; see e.g.\ \cite[Chapter VII]{EdHa10}.
The particular configuration required for a Faustian interchange is a vertex-edge pair immediately followed by an edge-triangle pair.

\smallskip
Only a measure-zero subset of the points $x$ in $\Rspace^3$ induce a squared distance function with a Faustian interchange.
We therefore use the vineyard algorithm \cite{CEM06}, which probes space with short segments of filters to find points of the mid-sphere axis.
Given the endpoints $x', x'' \in \Rspace^3$ of such a segment, this algorithm maintains the filter and the pairing of simplices while continuously moving $x$ from $x'$ to $x''$.
The elementary operation during this motion is a transposition of two simplices that are contiguous along the filter.
Whenever a transposition switches two edges, we check whether it also affects the pairing, and if it does, then this means we found a point whose squared distance function has a Faustian interchange.
If there is no such transposition along the segment, then we can be sure that it does not cross the mid-sphere axis.
Starting with the decomposition of $\Rspace^3$ into cubes of the form $[i\pm\frac{1}{2}] \times [j\pm\frac{1}{2}] \times [k\pm\frac{1}{2}]$,
for $i,j,k \in \Zspace$, we move $x$ along a unit length segment orthogonal to each facet.
For example, for the square $(i+\frac{1}{2}) \times [j\pm\frac{1}{2}] \times [k\pm\frac{1}{2}]$, we run the vineyard algorithm from $x' = (i,j,k)$ to $x'' = (i+1,j,k)$, and we add the square to the mid-sphere axis iff we detect at least one Faustian interchange in the $1$-parameter family of filters.
Sample results of this algorithm are shown in Figure~\ref{fig:waffles}. 
\begin{figure}[h!]
  \vspace{0.0in}
  \includegraphics[width = 0.37\textwidth]{Figs/ellipse_new_cropped.png}
  \includegraphics[width = 0.6\textwidth]{Figs/just_macaroni.png}
  \includegraphics[width = \textwidth]{Figs/coffins_again.png}
    \vspace{-0.0in}
    \caption{\footnotesize \emph{Top:} an ellipsoid on the \emph{left} and the $(3,7)$ torus knot with an elliptic cross-section on the \emph{right}.
    Their medial, midsphere, and circumsphere axes are \emph{pink}, \emph{black}, and \emph{blue}, respectively.
    \emph{Bottom}: three copies of a hexagonal solid with its midsphere axis affected by progressively stronger Euclidean pruning.
    For visual clarity, the top and bottom faces of the solid are removed in the image, and the hexagonal solid is shrunken slightly.
    \HE{[[I don't understand the bit with the inset wall.]] \ES{[[I removed the double wall entirely, so now we don't have to worry about it.]]}}
    }
    \label{fig:waffles}
\end{figure}

\medskip \noindent \textbf{Computing Faustian interchanges.}
From the input grid $\mathbb{G}$, we take the dual grid $\mathbb{G}'$. When we traverse an edge $g \in \mathbb{G}$ that gives rise to a Faustian interchange, we add the square face $f \in \mathbb{G'}$ that perpendicularly bisects $g$ to our medial axis approximation. 
We begin the hunt for Faustian interchanges at an arbitrary vertex $g \in \mathbb{G}$ and reduce the boundary matrix from scratch~\cite{{EdHa10}}.  Note that our algorithm can run in parallel except for this initial reduction, so in practice we subdivide the input grid and run four chunks in parallel, reducing from scratch once in each quadrant.
We then run the vineyard algorithm along each outgoing edge from the start vertex, and never have to reduce from scratch again, as long as $\mathbb{G}$ is connected. 

To check if traversing a grid edge uncovers a Faustian interchange, we begin by computing the full filters at $x'$ and $x''$, and then
 run bubble sort on the filter at $x'$ using the ordering in the filter at $x''$ as the comparison function.
 This causes the sorted filter to be the computed filter at $x''$.
Each pair of simplices $\sigma_i$, $\sigma_j$ that is swapped in the sort is stored as a potential Faustian interchange.
We then see if the vineyard algorithm detects a Faustian interchange, and if so, store $\sigma_i$ and $\sigma_j$ and add their corresponding grid face in $\mathbb{G}'$ to the approximation of the medial axis.

We continue in a flood-fill manner with breadth first search until all grid edges have been visited. We deal with the case of a disconnected grid by choosing unvisited starting vertices at random until all have been visited, and can thereby tailor our grid to explore the most interesting areas of our object and avoid wasting computation time on the boring bits.
 
The number of swaps is large.
Any two simplices $\sigma_i, \sigma_j$ will swap at most once along a single grid edge,
but for the computation of the mid-sphere axis on the entire domain
the swap will be found in all grid edges that cross the perpendicular bisector of the pair.
For a regular cube domain, this number of segments is $\bigo{g^2}$  where $g$ is the resolution of the cube.
Since this applies for any pair, we need to track a total of $\bigo{g^2n^2}$ swaps, where $n$ is the number of simplices in the complex.

Storing the reduction matrices takes up a significant amount of space.
After all adjacent grid edges around a grid vertex have been visited in the process above,
we only need the matrices at that vertex for computing 
the lifespan of the homology groups corresponding to interchanged simplices,
which we use in \emph{persistence pruning}.
To save memory, we precompute the lifespans for all of the stored Faustian interchanges along each segment and delete the matrices as soon as they are no longer needed.
Note that breadth first search shines here -- if we would use depth first search instead, we would have to store the matrices for much longer.

\medskip \noindent \textbf{Matrix representation.}
We use a sparse matrix representation tailored for the vineyard algorithm, inspired by the description in \cite{CEH07}.
We only support matrices with Boolean coefficients.
The algorithm requires us to swap rows and columns and add columns to one another, so our representation is optimized for these operations.
We use a sparse column representation, with row- and column permutations $\pi_r$ and $\pi_c$ to make swapping two rows or columns cheap, since we only swap indices in the permutation.
Accesses to the matrix must happen under the permutations, so reading the \emph{logical} entry $A_{i,j}$ corresponds to reading the \emph{physical} entry at $A_{\pi_r(i),\pi_c(j)}$.

\smallskip
Each column in the matrix is represented as a sorted list of numbers for the row indices which the column contains.
Adding a column $c$ to a column $d$ means computing the \texttt{XOR} of the columns and storing it in $d$ since we do addition in $\Zspace/\Zspace_2$.
The lists are sorted to make the addition operation cheap, as we can walk along each list and ignore an entry that appears in both lists.
Lastly, the vineyard algorithm does \emph{row}-operations on one matrix ($U$) instead of column operations.
To support this, we store its transpose $U^\top$ instead.

\smallskip
The vineyard algorithm requires the $low_A(k)$ operation, which finds the highest row index for column $k$ of matrix $A$.
While the columns are stored in sorted order, these are physical rows and $low$ needs the highest logical row.
This means we need to perform a linear scan and compute the maximum under the permutation $\pi_r^{-1}$.
Profiling found that this was the main bottleneck of our algorithm, so this operation is \emph{memoized}
and the memoized data is updated as we perform operations on the matrices. 

\medskip \noindent \textbf{Pruning strategies.}
We use several different types of pruning to refine our result, which we call \emph{Euclidean}, \emph{coface}, \emph{face}, and \emph{persistence} pruning.
We will call the simplices responsible for the Faustian interchange we are considering pruning from the axis $s_1$ and $s_2$, and whereas this paper focuses primarily on the midsphere axis, we will also discuss pruning for the medial and circumsphere axes, because the pruning strategies differ per dimension of homology the Faustian interchange deals with. 

In Euclidean pruning, we check that 
$\Edist{s_1}{s_2} \geq d$, where $d$ is a constant we can change to prune the axis more or less. The whole idea of the axis computation arises from the heuristic that a small movement along the grid (and therefore a small change in the filtration function) causes a large spatial jump of the simplices involved in the Faustian interchange, and this allows us to refine what we mean by large. An obvious improvement would be to use geodesic rather than Euclidean distance, but we find that we get satisfactory results using the Euclidean distance. We use Euclidean pruning for all three axes we compute, and generally the same value of $d$ works for each axis.

In face and coface pruning, we wanted to establish a neighbor relation on the input complex, and prune if $s_1$ and $s_2$ happen to be neighbors. This is again to make sure that the simplices are sufficiently far apart using a different heuristic than Euclidean distance. We prune a Faustian interchange if $s_1$ and $s_2$ are contained by the same parent simplex (coface pruning) or if they contain the same child simplex (face pruning). We use face pruning for the midsphere and circumsphere axis, but not the medial axis, because all vertices contain the empty set as a face and we like to avoid pruning away the entire axis. Likewise, we use coface pruning for the medial and midsphere axes, but not the circumsphere, as we only work in three dimensions.

Persistence pruning is our most advanced method of pruning and absolutely necessary for computing a good midsphere axis. Calling the two homology classes involved in the Faustian interchange $h_1$ and $h_2$, we keep the interchange iff the persistence lifespan of both $h_1$ and $h_2$ is longer than pruning parameter $p$. In practice, we just make sure that $p$ is nontrivial. This is necessary for the midsphere axis in particular because it detects the many spurious 1-homology classes introduced by triangle boundaries that are immediately filled in the filtration. We use persistence pruning for all three axis variants. 
} 

\section{Discussion}
\label{sec:7}

The main contribution of this paper is the introduction of the mid-sphere axis transform of a smoothly embedded connected closed surface in $\Rspace^3$, which is sensitive to ridge-like structures on the surface.
The definition is in terms of local and global topological aspects of the family of squared distance functions on the surface from a point.
We also introduce the notion of a Faustian interchange, which offers a unified framework in which the medial and mid-sphere axis transforms are particular cases.
This framework extends beyond three dimensions and leads to $d$ axis transforms for any smoothly embedded hypersurface in $\Rspace^d$.

\smallskip
The presented algorithm serves as a proof of concept, and improved algorithms are desirable.
One promising idea is the extension of the Voronoi heuristic to the mid-sphere axis. 
Note that the Voronoi tessellation is already used to compute the standard medial axis \cite{ABE09}, and, by analogy, the furthest-point Voronoi tessellation yields the circum-sphere axis.
To extend this heuristic to the mid-sphere axis, we will have to collect and combine substructures from various higher order Voronoi tessellations of the same set of points.

\smallskip
Another interesting direction is the analysis of effective pruning strategies that lower the sensitivity to noise in the data by defining mid-sphere axes on different scale levels.
For example, can the $\lambda$-medial axis introduced by Chazal and Lieutier~\cite{ChLi05} be extended to the mid-sphere axis?

\setcounter{section}{0}
\renewcommand{\thesection}{\appendixname~\Alph{section}}
\section{Pseudocode}
Here we present pseudocode of our algorithm. This is written per homology dimension and per component of the grid. In our implementation, we run different grid components in parallel. 
\begin{algorithm}
\caption{Computing the mid-sphere axis}
\begin{algorithmic}
\Function{Initialize}{$v$}
    \State $\pi_v \gets$ \Call{Ordering}{$v$}
    \Comment{Order simplices by distance function from grid point $v$}
    \State $D_v \gets$ boundary matrix ordered by $\pi_v$
    \State $R_v, U_v \gets$ \Call{Reduce}{$D_v$}\Comment{Following \cite{CEM06} we have $D_v = R_vU_v$}
\EndFunction

\Function{ProcessSegment}{$u, v$}
    \State{$F_{u,v} \gets \{\}$}
    \State{
        $D_v\gets D_u$,
        $R_v\gets R_u$,
        $U_v\gets U_u$
    }
    \Comment{Copy matrices. These are about to be updated}
    \State{$\pi_u \gets$ \Call{Ordering}{$u$}, $\pi_v \gets$ \Call{Ordering}{$v$}}
    \Comment{Simplex ordering wrt. $u$ and $v$}
    \State{$I\gets $ \Call{BubbleInterchanges}{$\pi_u$, $\pi_v$}}
    \Comment{Interchanges required to go from $\pi_u$ to $\pi_v$}
    \For{$(i, j)$ in $I$}
        \Comment{Handle interchange of simplices $\sigma_i$ and $\sigma_j$}
        \State{\Call{Vineyards}{$i$, $j$, $D_v$, $R_v$, $U_v$}}
        \Comment{Adjust matrices for $v$ following \cite{CEM06}}
        \If{was Faustian}
            \Comment{ See \autoref{dfn:Faustian_interchange} }
            \State{$F_{u,v} \gets F_{u,v} \cup \{(i, j)\}$}
            \Comment{Register $(i,j)$ as a Faustian interchange}
        \EndIf
    \EndFor
\EndFunction

\Function{Main}{$v_0$}
    \Comment{We start reducing from any vertex $v_0$}
    \State \Call{Initialize}{$v_0$}
    \For{$(u, w) \in$ \Call{BFS}{$v_0$}}\Comment{Breadth-first traversal}
        \State \Call{ProcessSegment}{$u$, $w$}
        \If{$\{u,v\}$ is visited for the last time}\Comment{Eager cleanup to save memory}
            \State Delete $D_w$, $R_w$, $U_w$ where $w=u$ or $w=v$ (or both)
        \EndIf
    \EndFor
    \For{$(u,v)$ with entries in $F_{u,v}$}
        \State{\Output{\Call{Prune}{$F_{u,v}$}}}
    \EndFor
\EndFunction

\end{algorithmic}
\end{algorithm}
\clearpage


\end{document}